\RequirePackage[T1]{fontenc}
\documentclass[conference,letterpaper]{IEEEtran}
\IEEEoverridecommandlockouts
\usepackage{amsmath}
\usepackage{newtxtext,newtxmath}
\usepackage{graphicx}
\usepackage{booktabs}
\usepackage{cite}
\usepackage[hidelinks]{hyperref}
\newcommand{\uM}{\ensuremath{\mu\mathrm{M}}}

\newcommand{\avg}[1]{\left\langle #1\right\rangle}
\newcommand{\dd}{\mathrm{d}}
\newtheorem{proposition}{Proposition}
\graphicspath{{figures/}}
\hypersetup{pdftitle={Reading While Writing: Baseline Information Requirements for Molecular Neural Interfaces},pdfauthor={Hongbin Ni and Ozgur B. Akan}}

\begin{document}
\title{Reading While Writing: Baseline Information\\Requirements for Molecular Neural Interfaces}
\author{\IEEEauthorblockN{Hongbin Ni, \IEEEmembership{Graduate Student Member, IEEE,} and Ozgur B. Akan, \IEEEmembership{Fellow, IEEE}}
\thanks{The authors are with the Centre for neXt Communications (CXC), Electrical Engineering Division, Department of Engineering, University of Cambridge, CB3~0FA Cambridge, U.K. (e-mail: \href{mailto:hn345@cam.ac.uk}{hn345@cam.ac.uk}).\par
Ozgur~B.~Akan is also with the Center for neXt Communications (CXC), Department of Electrical and Electronics Engineering, Ko\c{c} University, 34450 Istanbul, T\"{u}rkiye (e-mail: \href{mailto:oba21@cam.ac.uk}{oba21@cam.ac.uk}, \href{mailto:akan@ku.edu.tr}{akan@ku.edu.tr}).}}

\maketitle

\begin{abstract}
Molecular neural interfaces that deliver and sense the same neurotransmitter must estimate endogenous release in the presence of their own chemical input. We study two exchanging regions with shared saturable uptake and concentration-change measurements. Known delivery and exchange permit identification of apparent uptake at the driven region, but an unknown resting concentration difference leaves the source region's incremental uptake uncertain. We characterize all admissible source histories consistent with both ideal records and show how delivery and baseline information affect release classification. A dopamine-inspired example produces identical ideal concentration-change records for mean release-rate changes of $-8.99$ and $+1.31$ nanomolar per second. Stronger delivery changes whether all compatible sources imply suppression, with the true source held fixed. Under a Gaussian observation model, we bound discrimination based on chemical fluctuations. A baseline reference with an assumed standard deviation of 3.3 nanomolar reduces root-mean-square error from 10.41 to 1.21 nanomolar per second compared with an incorrect fixed baseline under exact gain calibration. Classification errors remain substantial with weak excitation, gain errors or responses near the category boundaries.
\end{abstract}

\begin{IEEEkeywords}
Molecular communication, neural interfaces, source estimation, identifiability, baseline calibration.
\end{IEEEkeywords}

\section{Introduction}
Chemical delivery and neurochemical sensing connect biological and digital systems in proposed dopamine read/write interfaces~\cite{Baradari2025}. Same-species two-way molecular communication motivates treatment of self-interference~\cite{Kwak2020}, while receiver saturation limits operation~\cite{Zabini2023}. ART-Rx adapts decision thresholds for molecular symbol detection through feedback control~\cite{NiAkan2025}. Endogenous release estimation requires accounting for uptake of delivered and endogenous molecules, even with a calibrated linear sensor.

Resting dopamine affects apparent uptake~\cite{ChenBudygin2007,Dreyer2010}, and basal turnover affects clearance after exogenous delivery~\cite{Michael2005BasalTurnover}. Nonlinear compartment transformations and target uncertainty among indistinguishable models are established~\cite{Chapman1994,Wu2016}. Prior work examines conclusions shared by nonidentifiable models and disagreement among their predictions~\cite{Cedersund2012,Letham2016}. Functional identifiability studies analyze ambiguities involving unknown dynamics, parameters, and unobserved initial states~\cite{LomanBaker2025,LomanBrowningBaker2026}. Combined absolute and differential measurements provide additional baseline information~\cite{Atcherley2015}.

We study endogenous release inference from two concentration-change records with known delivery and exchange. Sufficient excitation identifies apparent uptake at the driven region. Transferring this calibration to the source region requires the unobserved resting concentration difference. Spatial coupling and nonlinear uptake jointly create this ambiguity.

For the stated model, we characterize the complete family of nonnegative sources consistent with both ideal records and its full range of mean release-rate changes. Changing delivery can alter whether every compatible source supports suppression while the generating source remains fixed. We establish baseline and repeated-source conditions that resolve the ambiguity, and assess discrimination from noisy records and reconstruction sensitivity to calibration error.

\section{System Model and Inference Task}
\label{sec:model}
As shown in Fig.~\ref{fig:pair}, we represent local chemical delivery and neighboring endogenous release by two coupled, well-mixed extracellular regions. Both regions exhibit Michaelis--Menten uptake and exchange a single molecular species. For equal effective volumes, their concentrations $c_i(t)$ obey
\begin{align}
\dot c_1&=q+d(c_2-c_1)-U(c_1),\label{eq:plant1}\\
\dot c_2&=r+d(c_1-c_2)-U(c_2),\quad
U(c)=\frac{Vc}{K+c}.\label{eq:plant2}
\end{align}
Concentrations are in \uM, rates $q,r,V$ per effective volume in \uM/s, and $d$ in $\mathrm{s}^{-1}$. The receiver knows delivery $q(t)$ and exchange rate $d>0$. Intrinsic uptake parameters $V,K>0$ are shared and unknown. Region 1 has uptake but no endogenous release, an allocation considered in spatial dopamine models~\cite{Venton2003}. Endogenous release in region 2 has an unknown continuous nonnegative rate $r(t)$.

Before delivery, $q=0$ and the system is at a positive equilibrium $b=(b_1,b_2)$, giving
\begin{equation}
b_2=b_1+\frac{U(b_1)}d,\qquad r_0=U(b_1)+U(b_2).\label{eq:rest}
\end{equation}
The ideal differential records are $x_i=c_i-b_i$ in calibrated concentration units. A known filter $\tau\dot z_i=x_i-z_i$, initialized at rest, preserves this information in continuous noiseless records because $x_i=z_i+\tau\dot z_i$. Inversion of sampled noisy records is addressed separately. Subtracting a prestimulation reference defines zero concentration change, an absolute baseline measurement supplies $b_i$, and gain calibration supplies the concentration scale.

For a window $[t_1,t_2]$ of length $L$, the target is the mean change in release rate from its resting value $r_0$:
\begin{equation}
T=\frac1L\int_{t_1}^{t_2}[r(t)-r_0]\,\dd t.\label{eq:target}
\end{equation}
The target window is $[t_1,t_2]=[20,40]$ s. Reconstruction uses the completed 45 s observation record, excluding computational latency. Suppression means $T<-\delta$, equivalence $|T|\leq\delta$, and enhancement $T>\delta$, with illustrative $\delta=2.5$ nM/s. Each inference uses one intervention record. The delivery comparison holds the generating source fixed to isolate an inference effect. Source repeatability across trials is additional information, introduced separately in Section~\ref{sec:repeated}.

\begin{figure}[t]
\centering
\includegraphics[width=\columnwidth]{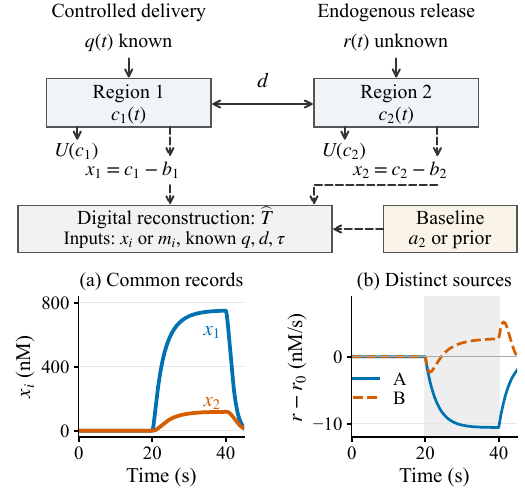}
\caption{Effective extracellular regions and digital readout. Solid arrows show chemical processes and dashed arrows show information paths. Noisy $m_i$ are defined in Section~\ref{sec:receiver}. At $q_{\max}=0.40$ \uM/s, model A uses Table~\ref{tab:parameters} and model B uses $\beta=5$ nM in~\eqref{eq:family}--\eqref{eq:source}. (a) Ideal records coincide. (b) Release-rate changes average $-8.99$ and $+1.31$ nM/s over the shaded window.}
\label{fig:pair}
\end{figure}

\section{Compatible Sources and Baseline Information}
\label{sec:family}
\subsection{Complete Compatible Family}
Define the apparent parameters and their invariant product
\begin{equation}
A=\frac{VK}{K+b_1},\qquad B=K+b_1,\qquad P=AB=VK.\label{eq:apparent}
\end{equation}
Subtracting the first resting balance from~\eqref{eq:plant1} yields the observed quantity
\begin{equation}
w=q+d(x_2-x_1)-\dot x_1=\frac{Ax_1}{B+x_1}.\label{eq:increment}
\end{equation}
This apparent-uptake form follows from the established baseline transformation~\cite{ChenBudygin2007,Dreyer2010}.

\begin{proposition}\label{prop:complete}
Under the assumptions in Section~\ref{sec:model}, suppose $x_1$ takes two distinct nonzero values. The ideal records uniquely determine $A,B$. Every compatible model is indexed by $\beta=b_1$ through
\begin{align}
K_\beta&=B-\beta,& V_\beta&=\frac{P}{B-\beta},\nonumber\\
b_{2,\beta}&=\beta+\frac{A\beta}{d(B-\beta)},&
c_{i,\beta}&=b_{i,\beta}+x_i,\label{eq:family}
\end{align}
where $b_{1,\beta}=\beta$, $0<\beta<B$, and
\begin{equation}
r_\beta=\dot x_2-d(x_1-x_2)+d(b_{2,\beta}-\beta)
+U_\beta(b_{2,\beta}+x_2).\label{eq:source}
\end{equation}
The admissible family consists exactly of members satisfying the parameter restrictions, $c_{i,\beta}>0$, and $r_\beta\geq0$ throughout the observation horizon.
\end{proposition}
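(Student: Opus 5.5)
The proof splits into three parts: identification of $(A,B)$ from the driven region, construction of the $\beta$-family with verification that each member reproduces both records, and a converse showing that every compatible model lies in the family.

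\emph{Identification of $A,B$.} Since $q$, $d$, $x_1$, $x_2$ and $\dot x_1$ are known from the ideal records, the quantity $w$ in~\eqref{eq:increment} is observed. The identity $w=Ax_1/(B+x_1)$ holds for the true model, because $U(c_1)-U(b_1)=VKx_1/[(K+b_1)(K+b_1+x_1)]$. For two distinct nonzero values $x_1=u,v$ with observed $w_u,w_v$, the identity gives $w(B+x_1)=Ax_1$. This is linear in $(A,B)$: $uA-w_uB=uw_u$, $vA-w_vB=vw_v$. The determinant is $uw_v-vw_u=AB\,uv(u-v)/[(B+u)(B+v)]\neq0$, since $A,B>0$ and $u\neq v$, $u,v\neq0$. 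Hence $A$ and $B$ are unique, and $P=AB$ is determined.

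\emph{Sufficiency of the family.} Fix $0<\beta<B$ and define $K_\beta,V_\beta,b_{2,\beta}$ as in~\eqref{eq:family}. Then $K_\beta>0$ and $V_\beta>0$. Also $V_\beta K_\beta/(K_\beta+\beta)=P/B=A$ and $K_\beta+\beta=B$, so the apparent parameters coincide. Moreover $U_\beta(\beta)=A\beta/B\cdot B/(B-\beta)\cdot$ adjusted; more simply, $U_\beta(\beta)=V_\beta\beta/B=A\beta/(B-\beta)$. Thus $b_{2,\beta}=\beta+U_\beta(\beta)/d$, which is the first condition in~\eqref{eq:rest}. With $c_{i,\beta}=b_{i,\beta}+x_i$, equation~\eqref{eq:plant1} reduces, after subtracting the resting balance, to~\eqref{eq:increment}. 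That equation holds because the observed $w$ equals $Ax_1/(B+x_1)$ with the same $A,B$. Equation~\eqref{eq:plant2} is satisfied by defining $r_\beta$ as in~\eqref{eq:source}. To see this, write $\dot c_2=\dot x_2$ and expand $d(c_1-c_2)=d(x_1-x_2)-d(b_{2,\beta}-\beta)$. The resulting $r_\beta$ is continuous and at rest equals $d(b_{2,\beta}-\beta)+U_\beta(b_{2,\beta})=U_\beta(\beta)+U_\beta(b_{2,\beta})$, matching $r_0$ in~\eqref{eq:rest}. Admissibility additionally requires positive concentrations and $r_\beta\ge0$, which are exactly the stated restrictions.

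\emph{Necessity.} Take any model $(V',K',b',r')$ meeting Section~\ref{sec:model} and producing the same records. Its apparent parameters must satisfy~\eqref{eq:increment} with the same observed $w$. By uniqueness, $V'K'/(K'+b_1')=A$ and $K'+b_1'=B$. Setting $\beta=b_1'$ gives $K'=B-\beta$, which forces $\beta<B$; positivity of the equilibrium forces $\beta>0$; and $V'=P/(B-\beta)$. The equilibrium relation~\eqref{eq:rest} then forces $b_2'=b_{2,\beta}$. Solving~\eqref{eq:plant2} for $r'$ forces $r'=r_\beta$. The constraints $c_i>0$ and $r'\ge0$ are inherited from the model assumptions.

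The main obstacle is the identification step: one must ensure that $w$ is genuinely observable, i.e.\ that $\dot x_1$ is available from ideal continuous records, as noted via the filter inversion, and that the $2\times2$ system is nonsingular. The determinant computation above handles nonsingularity, and the rest of the argument is direct algebraic bookkeeping.
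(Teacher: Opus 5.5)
Your proposal is correct and follows the paper's proof. It fixes $A,B$ by linearizing the apparent Michaelis--Menten increment at two distinct nonzero values of $x_1$, which is equivalent to the paper's slope--intercept form $x_1/w=B/A+x_1/A$. It then substitutes $\beta=b_1$ back for necessity and verifies both balances for sufficiency. There are two minor slips: the determinant is $Auv(u-v)/[(B+u)(B+v)]$ with no extra factor $B$ (it is still nonzero), and you could add, as the paper does, that local Lipschitz dynamics make $c_{i,\beta}$, which start at $b_{i,\beta}$ because $x_i(0)=0$, the unique trajectory each member generates.
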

\begin{IEEEproof}
For nonzero $x_1$,~\eqref{eq:increment} gives $x_1/w=B/A+x_1/A$. Two distinct values fix its slope and intercept, hence $A,B$. Setting $b_1=\beta$ then fixes $K_\beta,V_\beta$ by~\eqref{eq:apparent}, $b_{2,\beta}$ by~\eqref{eq:rest}, and $r_\beta$ by the second balance. This proves necessity. Conversely,~\eqref{eq:family} preserves $A,B$ and the first incremental balance. Equation~\eqref{eq:source} preserves the second balance and its resting value, including the constant exchange term. The states reproduce $x_i$ and have locally Lipschitz dynamics on the positive concentration domain. Every feasible member is therefore a compatible solution.
\end{IEEEproof}

\subsection{Release Target and Additional Information}
The resting concentration difference enters the source region's uptake law. Write $\Delta b=b_2-b_1$ and
\begin{equation}
C_\beta=K_\beta+b_{2,\beta}
=B+\frac{A\beta}{d(B-\beta)}.\label{eq:C}
\end{equation}
The second region's incremental uptake is $P x_2/[C_\beta(C_\beta+x_2)]$. Thus calibrating $A,B$ in region 1 leaves the second-region law dependent on the unknown resting difference $\Delta b$. Integrating its mass balance expresses the target in terms of the records and each candidate baseline:
\begin{align}
T_\beta&=\frac{x_2(t_2)-x_2(t_1)}L-d\avg{x_1-x_2}\nonumber\\
&\quad+P\avg{\frac{x_2}{C_\beta(C_\beta+x_2)}}.\label{eq:functional}
\end{align}
Here $\avg{\cdot}$ denotes the window average. For a compact feasible interval $\mathcal B$, continuity gives the complete target interval $\mathcal T=[\min_{\mathcal B}T_\beta,\max_{\mathcal B}T_\beta]$. Every compatible source supports suppression precisely when this interval's upper endpoint is below $-\delta$.

The sensitivity to the hidden baseline follows directly:
\begin{equation}
\frac{\partial T_\beta}{\partial\beta}
=P C'_\beta\avg{\frac{1}{(C_\beta+x_2)^2}-\frac{1}{C_\beta^2}}.\label{eq:target_sensitivity}
\end{equation}
Since $C'_\beta>0$, a nonnegative $x_2$ makes $T_\beta$ nonincreasing in $\beta$, while a nonpositive $x_2$ reverses that ordering. For mixed-sign records, the derivative's sign depends on the window average in~\eqref{eq:target_sensitivity}. Thus baseline uncertainty affects the release target through the observed trajectory, linking the calibration requirement to the operating condition.

Under the stated model, including shared uptake and known exchange, an exact baseline at either region determines $\beta$. For example, a second-region value $h$ gives the unique physical root
\begin{equation}
\beta(h)=\frac{2Bh}{S+\sqrt{S^2-4Bh}},\qquad S=B+A/d+h.\label{eq:inverse}
\end{equation}
A known resting concentration difference gives $\beta=Bd\Delta b/(A+d\Delta b)$. Known intrinsic $K$ or $V$ also fixes $\beta$. A bounded baseline observation intersects $\mathcal B$ with its corresponding inverse image and contracts $\mathcal T$.

A single region with calibrated incremental uptake $F(x)$ gives $r-r_0=\dot x-q+F(x)$ directly. Shared linear uptake also eliminates baseline dependence from both incremental laws. Thus spatial coupling and nonlinear uptake jointly create the present ambiguity.

\subsection{Repeated-Source Calibration as an Alternative}
\label{sec:repeated}
Two deliveries with the same resting concentrations, uptake parameters and unknown source $r(t)$ provide an alternative baseline calibration. For ordered inputs $q_h\geq q_l$ that differ on a nonzero time interval within the observed horizon, comparison of the cooperative dynamics gives $x_{2,h}>x_{2,l}$ after their effects propagate. For an incorrect candidate baseline, define
\begin{equation}
H_\beta(x)=\frac{Px}{C_\beta(C_\beta+x)}
-\frac{Px}{C_{b_1}(C_{b_1}+x)}.\label{eq:repeatH}
\end{equation}
Its derivative has a constant nonzero sign when $\beta\ne b_1$. The candidate sources satisfy $r_{\beta,h}-r_{\beta,l}=H_\beta(x_{2,h})-H_\beta(x_{2,l})\ne0$. Requiring the same source across both records therefore excludes every incorrect candidate. Source repeatability is an additional assumption, independent of the unknown baseline.

\section{Finite-Record Discrimination}
\label{sec:noise}
Distinct absolute concentrations can produce distinct chemical fluctuations even when $x_i$ coincide. We quantify this information using the linear-noise approximation (LNA)~\cite{Komorowski2011}. Along a deterministic trajectory, concentration fluctuations obey $\dd\xi=J\xi\,\dd t+Q_c^{1/2}\dd W$, with
\begin{align}
J&=\begin{bmatrix}-d-U'(c_1)&d\\d&-d-U'(c_2)\end{bmatrix},\nonumber\\
Q_c&=\Omega^{-1}\!\biggl[\operatorname{diag}(q+U(c_1),r+U(c_2))\nonumber\\
&\hspace{4em}+d(c_1+c_2)\begin{bmatrix}1&-1\\-1&1\end{bmatrix}\biggr].\label{eq:lna}
\end{align}
This law treats source, uptake and exchange events as independent Poisson reactions. Known delivery specifies the intensity $\Omega q(t)$, while individual injection events remain unobserved. Here $\Omega$ is the number of molecules per \uM. Filtered fluctuations satisfy $\tau\dot\nu=\xi-\nu$. For the augmented state $(\xi^\mathsf T,\nu^\mathsf T)^\mathsf T$, define
\begin{equation}
F=\begin{bmatrix}J&0\\I/\tau&-I/\tau\end{bmatrix},\quad
Q=\begin{bmatrix}Q_c&0\\0&0\end{bmatrix},\quad S_z=[0\ I].\label{eq:augmented}
\end{equation}
The initial state is zero-mean Gaussian with stationary resting covariance $\Pi_0$, where $F_0\Pi_0+\Pi_0F_0^\mathsf T+Q_0=0$. Thereafter $\dot\Pi=F\Pi+\Pi F^\mathsf T+Q$. If $\Phi(t,s)$ is the state-transition matrix generated by $F$, the chemical observation covariance is
\begin{equation}
K_\nu(t,s)=S_z\Phi(t,s)\Pi(s)S_z^\mathsf T,\qquad t\geq s.\label{eq:covariance}
\end{equation}

Without an absolute baseline measurement, the sampled record is $Y_i[k]=z_i(t_k)+\nu_i(t_k)+\epsilon_i[k]$ at $t_k=0.1k$ s, $k=0,\ldots,450$. Electronic noise is jointly zero-mean Gaussian, independent of chemical fluctuations and across channels, with $\operatorname{Cov}(\epsilon_i[k],\epsilon_i[l])=\sigma_e^2\rho^{|k-l|}$. Adding this covariance to~\eqref{eq:covariance} gives the covariance of the stacked record. Two compatible models therefore give Gaussian distributions with equal means and potentially different $902\times902$ covariance matrices $\Sigma_a,\Sigma_b$. Their Kullback--Leibler divergence, using natural logarithms, is
\begin{equation}
D_{a\|b}=\frac12\left[\operatorname{tr}(\Sigma_b^{-1}\Sigma_a)-902
+\log\frac{\det\Sigma_b}{\det\Sigma_a}\right].\label{eq:kl}
\end{equation}
Let $\varepsilon=\sqrt{\min(D_{a\|b},D_{b\|a})/2}$. Pinsker's inequality~\cite{CoverThomas2006} bounds total variation by $\varepsilon$. For a suppression decision event $\mathcal D$, this gives
\begin{equation}
\Pr_a(\mathcal D)\leq\Pr_b(\mathcal D)+\varepsilon.\label{eq:power}
\end{equation}
It also gives an equal-prior binary error probability of at least $(1-\varepsilon)/2$. Applying the same temporal referencing to both models cannot increase distinguishability. This bound applies to the stated Gaussian observation law and finite record. Adding a baseline measurement changes the available information.

\section{Numerical Results and Discussion}
\label{sec:results}

\begin{figure*}[t]
\centering
\includegraphics[width=\textwidth]{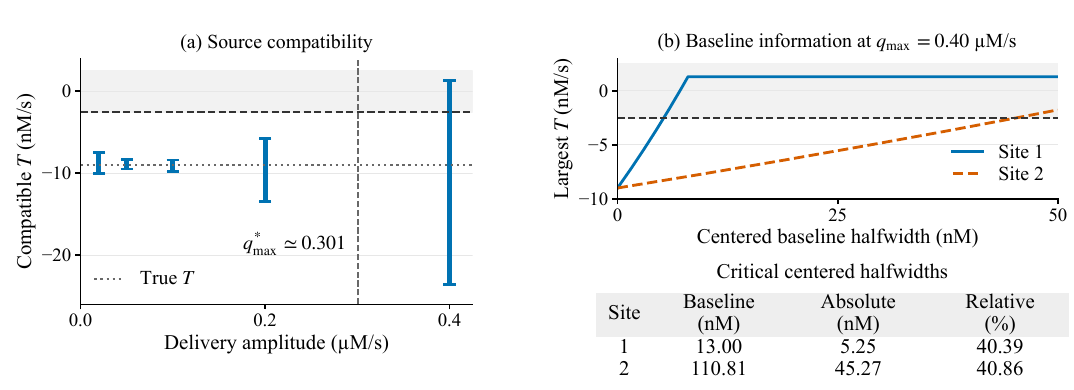}
\caption{Actuation and baseline information with ideal differential records. (a) Complete compatible target ranges at five deliveries, with true mean release-rate change $-8.99$ nM/s held fixed. (b) Largest compatible target after a centered baseline constraint at either region, at $q_{\max}=0.40$ \uM/s. Gray shading denotes equivalence and the dashed line marks $-\delta$. The tabulated critical halfwidths compare absolute and proportional precision. Suppression for every compatible source requires a smaller halfwidth.}
\label{fig:actuation}
\end{figure*}
\subsection{Operating Example and Continuous Compatibility}
The dopamine-inspired example in Table~\ref{tab:parameters} uses selected uptake scales motivated by regional modeling~\cite{Venton2003} and a baseline scale motivated by microdialysis~\cite{Hershey2013}. Values from different preparations define a composite illustrative scenario. Exchange, effective volume and observation dynamics are chosen modeling parameters. Prestimulation fast-scan controlled-adsorption voltammetry can supply tonic dopamine concentration before differential fast-scan cyclic voltammetry recording~\cite{Atcherley2015}. This motivates the reference principle. Its acquisition and settling precede our 45 s record, and its assumed precision is specified independently.

\begin{table}[t]
\caption{Model and Observation Parameters. SD denotes standard deviation.}
\label{tab:parameters}
\centering
\begin{tabular}{@{}lll@{}}
\toprule
Quantity & Value & Status\\
\midrule
$V,K$ & $0.58$ \uM/s, $0.58$ \uM & Illustrative scales~\cite{Venton2003}\\
$b_1,b_2$ & $13$, $110.81$ nM & Scale~\cite{Hershey2013}, equilibrium\\
$d$, volume/region & $0.13$ s$^{-1}$, $10$ pL & Chosen\\
$\tau$, sampling & $0.2$ s, $10$ Hz & Chosen filter, sampling~\cite{Keithley2011}\\
$\sigma_e,\rho$ & $4.67$ nM, $0.3$ & Chosen electronics\\
$\sigma_a$, gain SD & $3.3$ nM, $0,1,5\%$ & Chosen calibration\\
Observation horizon & $45$ s & Record duration\\
$\delta$ & $2.5$ nM/s & Category criterion\\
\bottomrule
\end{tabular}
\end{table}

Let $p_\tau$ solve $\tau\dot p_\tau+p_\tau=\mathbf1_{[20,40]}$ with zero initial value. The actuation study uses $q=q_{\max}p_{0.5}$ and $r=r_0(1-0.1p_3)$, giving $T=-8.9907$ nM/s. Prior restrictions are $5\leq\beta\leq100$ nM, uptake parameters within $\pm20\%$ of the nominal entries in Table~\ref{tab:parameters}, and $b_{2,\beta}\leq250$ nM. Their static intersection is $\mathcal B=[5,28.6302]$ nM.

This entire interval is dynamically admissible. The cooperative system has the constant lower barrier $\ell_1=0.011$ \uM, $\ell_2=\ell_1+U(\ell_1)/d\simeq0.09404$ \uM. At that point the first drift is $q\geq0$ and the second is at least $0.0034585$ \uM/s. Initial concentrations exceed the barrier. Alternative concentrations therefore exceed $0.003$ and $0.02533$ \uM. Moreover,
\begin{equation}
\frac{\partial r_\beta}{\partial\beta}
=\frac{2P}{(B-\beta)^2}+\frac{P C'_\beta}{(C_\beta+x_2)^2}>0.\label{eq:sourcepositive}
\end{equation}
At $\beta_{\min}$, define $H(x)=f_{C_{\beta_{\min}}}(x)-f_{C_{b_1}}(x)$, where $f_C(x)=Px/[C(C+x)]$. Since $C_{\beta_{\min}}<C_{b_1}$, $H'(x)>0$. Thus $r_{\beta_{\min}}\geq r_{0,\beta_{\min}}-0.1r_0+H(\ell_2-b_2)=0.0299985$ \uM/s, certifying positive sources throughout the interval for every tested nonnegative delivery.

Figure~\ref{fig:actuation} shows the complete target intervals. For $q_{\max}=0.20$ \uM/s, all targets lie in $[-13.48,-5.78]$ nM/s and support suppression. At $0.40$ \uM/s the range is $[-23.64,1.31]$ nM/s, admitting both suppression and equivalence. Figure~\ref{fig:pair} compares the generating baseline $\beta_a=13$ nM with $\beta_b=5$ nM. The latter gives $b_{2,b}=42.11$ nM, $V_b=0.5721$ \uM/s and $K_b=0.588$ \uM. Equation~\eqref{eq:source} defines its source, yielding $T_b=1.3136$ nM/s and coincident differential records.

The extrema are justified over continuous $\beta$ using~\eqref{eq:target_sensitivity}. For a linear concentration segment with endpoints $a,b$, the segment-average derivative with respect to $C$ is $P[(C+a)^{-1}(C+b)^{-1}-C^{-2}]$. Weighting by segment duration divided by $L$, splitting at zero crossings and bounding over the $C$ interval gives a uniform sign for each tested delivery. Thus endpoints attain the extrema. At $q_{\max}=0.10$ \uM/s the sign is negative. Cooperative input ordering increases $x_2$ with delivery and preserves that sign. Hence $\beta_{\min}$ maximizes the target throughout $[0.20,0.40]$ \uM/s. Since $H'(x)>0$ there, the maximum rises strictly with delivery and crosses $-\delta$ once, near $q_{\max}=0.301$ \uM/s. The same scalar ambiguity persists on both sides, while the supported category changes.

Dynamics use relative/absolute tolerances $10^{-10}/10^{-12}$ and maximum step $0.025$ s. Target quadrature uses a $0.00625$ s grid, checked at twice and half this spacing. Piecewise-linear target shifts agree with dense-solution quadrature within $1.61\times10^{-7}$ nM/s, and~\eqref{eq:functional} agrees with the equivalent uptake-difference form within $4.28\times10^{-7}$ nM/s. Concavity bounds enclose segment integrals. A second-derivative interpolation allowance below $0.0072$ nM/s gives the crossing bracket $[0.30096,0.30132]$ \uM/s, conditional on exact trajectory knots. Global ODE error is assessed by refinement, without a validated interval ODE bound.

At delivery $0.40$ \uM/s, consider baseline constraints centered on the generating values, $|b_{i,\beta}-b_i|\leq\epsilon_i$. The category boundary satisfies $T_{\beta_*}=-\delta$ at $\beta_*\simeq7.7499$ nM. Since $T_\beta$ decreases and $b_{i,\beta}$ increases with $\beta$, suppression for every compatible source requires
\begin{equation}
\epsilon_i<b_i-b_{i,\beta_*}.\label{eq:anchorprecision}
\end{equation}
The corresponding halfwidths are $5.25$ nM at region 1 and $45.27$ nM at region 2, approximately $40.4\%$ and $40.9\%$ of their baselines. The apparent location advantage therefore depends on the error budget. In this example, $b_2(\beta)=\beta[1+A/(d(B-\beta))]$ is nearly proportional to $\beta$ over the relevant range. Equal fractional bounds constrain $\beta$ almost equally, explaining the similar proportional tolerances. These are centered deterministic baseline bounds with exact apparent uptake parameters. They differ from random measurement errors and from gain uncertainty affecting the differential records. The crossing depends on the source, prior and category threshold.

\subsection{Information in Fluctuations}
For the pair in Fig.~\ref{fig:pair}, 10 pL gives $\Omega=6.02214\times10^6$ molecules per \uM. The model-specific LNA covariances and electronic noise in Table~\ref{tab:parameters} yield $\min D=3.319\times10^{-4}$ and $\varepsilon=0.01288$. A rule with at most $5\%$ false-suppression probability under model B has at most $6.29\%$ detection probability under model A. The equal-prior binary error probability is at least $49.36\%$. This calculation includes covariance information and remains conditional on the chosen volume, noise and Gaussian approximation.

\subsection{Fitted-Kinetics Receiver and Calibration Sensitivity}
\label{sec:receiver}
We next simulate $T\in\{-9,-3,-2,0,2,3,9\}$ nM/s at $q_{\max}\in\{0.05,0.20,0.40\}$ \uM/s. The source is $r=r_0+\alpha p_3$, with $\alpha=T/\avg{p_3}$ and all sources positive. Each condition has 1,000 independent realizations. Comparisons use paired Gaussian random draws, with chemical covariance evaluated along each condition's trajectory. Gains $g_i=1+\eta_i$ remain constant within each record. The errors $\eta_i\sim\mathcal N(0,s_g^2)$ are independent across channels and independent of chemical and electronic noise. Measurements are $m_i=g_i(z_i+\nu_i)+\epsilon_i$, and the added baseline observation is
\begin{equation}
a_2=g_2(b_2+\Delta_2)+\nu_a,\qquad
\nu_a\sim\mathcal N(0,\sigma_a^2).\label{eq:anchor}
\end{equation}
Here $\Delta_2$ is a fixed transfer offset between the earlier baseline acquisition and the current resting reference, and $\nu_a$ is independent additive noise. Unless stated otherwise, $\Delta_2=0$. The reference and differential channel share $g_2$, while reconstruction assumes $g_i=1$ and estimates $A,B$. The 3.3 nM additive reference-noise SD is assumed. Acquisition overhead and time-varying reference drift require preparation-specific characterization.

The receiver subtracts the mean of samples before 10 s, then applies a seven-sample cubic Savitzky--Golay smoother and derivative to form $\hat x=\hat z+\tau\widehat{\dot z}$. At record edges, the local polynomial is evaluated within the available samples. On the 25 one-second intervals from 20 to 45 s, it fits the integrated first balance by three-pass feasible generalized least squares. For interval $j$, its residual is
\begin{equation}
e_j=\Delta\hat x_{1,j}-\int_jq\,\dd t
-d\int_j(\hat x_2-\hat x_1)\,\dd t
+A\int_j\frac{\hat x_1}{B+\hat x_1}\,\dd t.\label{eq:gls}
\end{equation}
Integrals use composite Simpson quadrature. Let $\mathsf I$ and $\mathsf E$ implement these interval integrals and endpoint differences, and let $\Lambda$ implement referencing and filter inversion. With $R_{kl}=\rho^{|k-l|}$, each processed electronic covariance is $\Gamma=\sigma_e^2\Lambda R\Lambda^\mathsf T$. Linearizing~\eqref{eq:gls} gives
\begin{align}
M_1&=\mathsf E+d\mathsf I+\mathsf I\operatorname{diag}\!\left(\frac{AB}{(B+\hat x_1)^2}\right),\nonumber\\
M_2&=-d\mathsf I,\qquad
W_e=M_1\Gamma M_1^\mathsf T+M_2\Gamma M_2^\mathsf T.\label{eq:weights}
\end{align}
Each pass fixes $W_e$ at the current $A,B$ and minimizes $e^\mathsf T W_e^{-1}e$ over logarithmic parameters. Bound-constrained trust-region reflective optimization uses relative cost, step and scaled-gradient tolerances of $10^{-9}$. Simulated records include chemical noise, while the fitting weights use the electronic-noise approximation above. Initialization is $(A,B)=(0.55,\max(0.6,B_{\min}+0.001))$, with $A\in[0.02,2]$ \uM/s, $B\in[B_{\min},2]$ \uM{} and $B_{\min}=\max(0.05,-\min\hat x_1+0.005)$ \uM. Equations~\eqref{eq:inverse} and~\eqref{eq:functional} then give $\hat T$. Categories use the same fixed $\delta$. Performance is measured by root-mean-square error (RMSE) and category-error frequency.

\begin{figure}[t]
\centering
\includegraphics[width=\columnwidth]{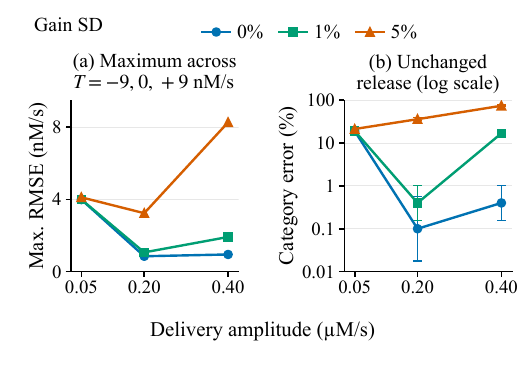}
\caption{Baseline-assisted reconstruction with a 3.3 nM additive reference SD. Maximum RMSE is taken across $T=-9,0,9$ nM/s. Category errors for unchanged release ($T=0$) use a logarithmic scale and 95\% Wilson intervals from 1,000 realizations per condition. Gains are constant within each record, with the indicated SD across realizations. Lines connect the three tested deliveries.}
\label{fig:receiver}
\end{figure}

Figure~\ref{fig:receiver} separates excitation and calibration effects. The $T=+9$ nM/s case gives the maximum RMSE at every displayed setting. At delivery $0.20$ \uM/s and $1\%$ gain SD, this maximum is $1.06$ nM/s. Category error for unchanged release is $0.4\%$, and both $T=\pm9$ nM/s cases have zero observed errors. The two-sided Wilson upper limit for zero errors in 1,000 is $0.383\%$.

Weak delivery limits parameter estimation. At $0.05$ \uM/s and exact gains, category error for unchanged release is $19.0\%$ and $28.4\%$ of fits lie near a parameter bound. Supplying the true $A,B$ reduces this error to $0.2\%$. At $0.40$ \uM/s, error for unchanged release rises from $0.4\%$ at exact gain to $16.9\%$ at $1\%$ gain SD and $74.3\%$ at $5\%$ gain SD, with a 95\% Wilson interval of $[71.5,76.9]\%$ for the latter. Cases near the category boundaries, $T=\pm2,\pm3$ nM/s, have approximately $22$--$33\%$ error at exact gain for the two higher deliveries. All optimizations succeeded and estimates were finite. These classifications use point estimates, with accuracy depending on the response and calibration.

An initialization check used 100 unchanged-release records at each of $q_{\max}=0.05,0.20$ \uM/s with exact gains, sampled uniformly without replacement before inspecting fits and paired across conditions. Besides the original start, we used $A_0\in\{0.05,0.2,0.8,1.5\}$ \uM/s and $B_0=B_{\min}+f(2-B_{\min})$, with $f\in\{0.2,0.8\}$. All starts converged and retained each record's category. Maximum within-record target spreads were below $0.041$ nM/s for three-pass fits and $4.35\times10^{-4}$ nM/s with the original final-pass weights fixed. The checked weak-input failures persisted across these initializations.

Fixed reference offsets also affect reconstruction. At $q_{\max}=0.40$ \uM/s, $T=0$ and exact gains, offsets $\Delta_2=-5$ and $+5$ nM give category errors of $2.5\%$ and $3.2\%$, respectively, compared with $0.4\%$ at zero offset. These paired 1,000-realization comparisons retain $\sigma_a=3.3$ nM and quantify sensitivity to a fixed baseline transfer error.

Table~\ref{tab:alias} evaluates the alternative source in Fig.~\ref{fig:pair}, with $T=1.3136$ nM/s in the equivalence band. Both rows use the same reconstruction and differential records with different baseline inputs. The noisy reference reduces RMSE and false suppression, while total category error remains $15.7\%$, mainly from crossing the nearby enhancement boundary.

\begin{table}[t]
\caption{Reconstruction of the Alternative Source, with True $b_2=42.11$ nM. Delivery is 0.40 \uM/s, gains are exact, and each row uses 1,000 realizations.}
\label{tab:alias}
\centering
\begin{tabular}{@{}lccc@{}}
\toprule
Baseline information & RMSE & False & Total category\\
 & (nM/s) & suppression & error\\
\midrule
Assumed $b_2=110.81$ nM & 10.41 & 100\% & 100\%\\
Noisy reference, SD 3.3 nM & 1.21 & 0.2\% & 15.7\%\\
\bottomrule
\end{tabular}
\end{table}

\subsection{Implications for Interface Operation}
Weak delivery can support suppression in ideal records despite poor excitation for kinetic estimation. Stronger delivery improves excitation but can enlarge the compatible target range. Gain errors perturb the accumulation and exchange terms in~\eqref{eq:functional}, the fitted uptake law, and the baseline reference. Larger delivery-induced concentration changes can therefore increase reconstruction error. The middle tested delivery gives the smallest maximum RMSE across the three specified release changes for each examined gain setting. Delivery and calibration should be evaluated jointly. Conclusions that hold throughout $\mathcal T$ apply to all compatible ideal models, while classifications from noisy records use point estimates.

Preparation-specific application requires calibrated delivery per effective volume, exchange, concentration scale and sensor dynamics, and evidence for the assumed uptake and release allocation. Admissibility ensures nonnegative release within these mass balances. Cellular mechanisms may impose further constraints. Uncertain transport and estimation of sensor gains remain extensions.

\section{Conclusion}
In the specified two-region model, an unknown resting concentration difference can leave endogenous release ambiguous despite identifiable local apparent uptake. We characterize the complete compatible source family and establish when every compatible source supports suppression without unique source recovery. Delivery can change this conclusion with the generating source fixed. Baseline information narrows or resolves the ambiguity, with repeated-source calibration providing an alternative. Under a Gaussian observation model, finite-record analysis bounds discrimination, while simulations demonstrate accurate baseline-assisted reconstruction at selected operating conditions. These findings support joint evaluation of delivery and calibration in molecular neural-interface design.

\bibliographystyle{IEEEtran}
\bibliography{references}

@article{ChenBudygin2007,
  author  = {Chen, Kevin C. and Budygin, Evgeny A.},
  title   = {Extracting the basal extracellular dopamine concentrations from the evoked responses: Re-analysis of the dopamine kinetics},
  journal = {J. Neurosci. Methods},
  year    = {2007},
  month   = aug,
  volume  = {164},
  number  = {1},
  pages   = {27--42},
  doi     = {10.1016/j.jneumeth.2007.03.020},
}

@article{Michael2005BasalTurnover,
  author  = {Michael, Adrian C. and Borland, Laura M. and Mitala, Jr., Joseph J. and Willoughby, Bridget M. and Motzko, Christina M.},
  title   = {Theory for the impact of basal turnover on dopamine clearance kinetics in the rat striatum after medial forebrain bundle stimulation and pressure ejection},
  journal = {J. Neurochem.},
  year    = {2005},
  month   = sep,
  volume  = {94},
  number  = {5},
  pages   = {1202--1211},
  doi     = {10.1111/j.1471-4159.2005.03265.x},
}

@article{Dreyer2010,
  author  = {Dreyer, Jakob K. and Herrik, Kjartan F. and Berg, Rune W. and Hounsgaard, J{\o}rn D.},
  title   = {Influence of Phasic and Tonic Dopamine Release on Receptor Activation},
  journal = {J. Neurosci.},
  year    = {2010},
  month   = oct,
  volume  = {30},
  number  = {42},
  pages   = {14273--14283},
  doi     = {10.1523/JNEUROSCI.1894-10.2010},
}

@article{Chapman1994,
  author  = {Chapman, Michael J. and Godfrey, Keith R. and Vajda, Sandor},
  title   = {Indistinguishability for a class of nonlinear compartmental models},
  journal = {Math. Biosci.},
  year    = {1994},
  month   = jan,
  volume  = {119},
  number  = {1},
  pages   = {77--95},
  doi     = {10.1016/0025-5564(94)90005-1},
}

@article{Wu2016,
  author  = {Wu, Xiaotian and Nekka, Fahima and Li, Jun},
  title   = {Steady-state volume of distribution of two-compartment models with simultaneous linear and saturated elimination},
  journal = {J. Pharmacokinet. Pharmacodyn.},
  year    = {2016},
  month   = aug,
  volume  = {43},
  number  = {4},
  pages   = {447--459},
  doi     = {10.1007/s10928-016-9483-z},
}

@article{Komorowski2011,
  author  = {Komorowski, Micha{\l} and Costa, Maria J. and Rand, David A. and Stumpf, Michael P. H.},
  title   = {Sensitivity, robustness, and identifiability in stochastic chemical kinetics models},
  journal = {Proc. Natl. Acad. Sci. USA},
  year    = {2011},
  month   = may,
  volume  = {108},
  number  = {21},
  pages   = {8645--8650},
  doi     = {10.1073/pnas.1015814108},
}

@article{Atcherley2015,
  author  = {Atcherley, Christopher W. and Wood, Kevin M. and Parent, Kate L. and Hashemi, Parastoo and Heien, Michael L.},
  title   = {The coaction of tonic and phasic dopamine dynamics},
  journal = {Chem. Commun.},
  year    = {2015},
  month   = feb,
  volume  = {51},
  number  = {12},
  pages   = {2235--2238},
  doi     = {10.1039/C4CC06165A},
}

@article{Kwak2020,
  author  = {Kwak, Jong Woo and Yilmaz, H. Birkan and Farsad, Nariman and Chae, Chan-Byoung and Goldsmith, Andrea J.},
  title   = {Two-Way Molecular Communications},
  journal = {IEEE Trans. Commun.},
  year    = {2020},
  month   = jun,
  volume  = {68},
  number  = {6},
  pages   = {3550--3563},
  doi     = {10.1109/TCOMM.2020.2977282},
}

@article{Baradari2025,
  author  = {Baradari, Elham and Akan, Ozgur B.},
  title   = {Molecular Communication-Based Intelligent Dopamine Rate Modulator for {Parkinson's} Disease Treatment},
  journal = {IEEE Trans. Nanobiosci.},
  year    = {2025},
  month   = apr,
  volume  = {24},
  number  = {2},
  pages   = {136--144},
  doi     = {10.1109/TNB.2024.3456031},
}

@article{Venton2003,
  author = {Venton, B. Jill and Zhang, Hui and Garris, Paul A. and Phillips, Paul E. M. and Sulzer, David and Wightman, R. Mark},
  title = {Real-time decoding of dopamine concentration changes in the caudate--putamen during tonic and phasic firing},
  journal = {J. Neurochem.},
  year = {2003},
  month = dec,
  volume = {87},
  number = {5},
  pages = {1284--1295},
  doi = {10.1046/j.1471-4159.2003.02109.x}
}

@article{Hershey2013,
  author = {Hershey, Neil D. and Kennedy, Robert T.},
  title = {In Vivo Calibration of Microdialysis Using Infusion of Stable-Isotope Labeled Neurotransmitters},
  journal = {ACS Chem. Neurosci.},
  year = {2013},
  month = may,
  volume = {4},
  number = {5},
  pages = {729--736},
  doi = {10.1021/cn300199m}
}

@article{Keithley2011,
  author = {Keithley, Richard B. and Takmakov, Pavel and Bucher, Elizabeth S. and Belle, Anna M. and Owesson-White, Catarina A. and Park, Jinwoo and Wightman, R. Mark},
  title = {Higher Sensitivity Dopamine Measurements with Faster-Scan Cyclic Voltammetry},
  journal = {Anal. Chem.},
  year = {2011},
  month = may,
  volume = {83},
  number = {9},
  pages = {3563--3571},
  doi = {10.1021/ac200143v}
}

@article{Letham2016,
  author = {Letham, Benjamin and Letham, Portia A. and Rudin, Cynthia and Browne, Edward P.},
  title = {Prediction uncertainty and optimal experimental design for learning dynamical systems},
  journal = {Chaos},
  year = {2016},
  month = jun,
  volume = {26},
  number = {6},
  eid = {063110},
  note = {{Art.} no. 063110},
  doi = {10.1063/1.4953795}
}

@misc{LomanBaker2025,
  author = {Loman, Torkel E. and Baker, Ruth E.},
  title = {Functional and parametric identifiability for universal differential equations applied to chemical reaction networks},
  year = {2025},
  month = oct,
  howpublished = {arXiv preprint arXiv:2510.14140},
  eprint = {2510.14140},
  archivePrefix = {arXiv},
  primaryClass = {math.DS},
  doi = {10.48550/arXiv.2510.14140}
}

@article{Cedersund2012,
  author = {Cedersund, Gunnar},
  title = {Conclusions via unique predictions obtained despite unidentifiability -- new definitions and a general method},
  journal = {FEBS J.},
  volume = {279},
  number = {18},
  pages = {3513--3527},
  year = {2012},
  month = sep,
  doi = {10.1111/j.1742-4658.2012.08725.x}
}

@misc{LomanBrowningBaker2026,
  author = {Loman, Torkel E. and Browning, Alexander P. and Baker, Ruth E.},
  title = {Structural functional identifiability and model discovery in differential equation models},
  howpublished = {arXiv preprint arXiv:2606.30289},
  year = {2026},
  month = jun,
  eprint = {2606.30289},
  archivePrefix = {arXiv},
  primaryClass = {math.ST},
  doi = {10.48550/arXiv.2606.30289}
}

@article{Zabini2023,
  author = {Zabini, Flavio and Masini, Barbara Mav{\'i}},
  title = {Performance Limits of Spatially Distributed Molecular Communications With Receiver Saturation},
  journal = {IEEE Trans. Mol., Biol. Multi-Scale Commun.},
  year = {2023},
  month = mar,
  volume = {9},
  number = {1},
  pages = {94--99},
  doi = {10.1109/TMBMC.2022.3225740}
}

@book{CoverThomas2006,
  author = {Cover, Thomas M. and Thomas, Joy A.},
  title = {Elements of Information Theory},
  edition = {2},
  publisher = {Wiley},
  year = {2006},
  isbn = {9780471241959},
  doi = {10.1002/047174882X}
}

@article{NiAkan2025,
  author = {Ni, Hongbin and Akan, Ozgur B.},
  title = {{ART-Rx}: A Proportional-Integral-Derivative ({PID}) Controlled Adaptive Real-Time Threshold Receiver for Molecular Communication},
  journal = {IEEE Trans. Mol., Biol. Multi-Scale Commun.},
  year = {2025},
  month = sep,
  volume = {11},
  number = {3},
  pages = {435--450},
  doi = {10.1109/TMBMC.2025.3581470}
}
\end{document}